\documentclass[preprint,10pt]{elsarticle_mod}
\usepackage{url}
\usepackage{amsmath}
\usepackage{amssymb}
\usepackage{amsthm}
\usepackage{graphics}
\usepackage{graphicx}
\usepackage{times}
\usepackage{algorithm,algorithmic}
\usepackage{listings}
\lstloadlanguages{[5.2]Mathematica}

\newcommand{\E}[1]{\mathbf{E}\left[#1\right]}

\newtheorem{theorem}{Theorem}

\newtheorem{Lemma}[theorem]{Lemma}

\urldef{\mailsa}\path|rafal.kapelko@pwr.edu.pl|

\begin{document}
\begin{frontmatter}
\title{On the moment distance of Poisson processes 
}
\author[pwr]{Rafa\l{} Kapelko\corref{cor1}}
\ead{rafal.kapelko@pwr.edu.pl}
\fntext[pwrfootnote]{Research supported by grant nr 0401/0086/16}
\cortext[cor1]{Corresponding author at: Department of Computer Science,
Faculty of Fundamental Problems of Technology, Wroc{\l}aw University of Science and Technology, 
 Wybrze\.{z}e Wyspia\'{n}skiego 27, 50-370 Wroc\l{}aw, Poland. Tel.: +48 71 320 33 62; fax: +48 71 320 07 51.}
\address[pwr]{ Department of Computer Science,
Faculty of Fundamental Problems of Technology, Wroc{\l}aw University of Science and Technology, Poland}
\begin{abstract}
Consider the distance between two i.i.d. and independent Poisson processes with arrival rate $\lambda>0$ and respective arrival times $X_1,X_2,\dots$ and
$Y_1,Y_2,\dots$ on a line. We give a closed analytical formula for the 
$\E{|X_{k+r}-Y_k|^a}, $ for any integer $k\ge 1, r\ge 0$ and $a\ge 1.$
The expected difference of the arrival times to the power $a$ between two i.i.d. and independent Poisson processes we represent as the  combination of the Pochhammer polynomials. 

Especially, for $r=0$ and any positive integer $a,$ the following identity is valid 
$$
\E{|X_k-Y_k|^a}=\frac{a!}{\lambda^a}\frac{\Gamma\left(\frac{a}{2}+k\right)}{\Gamma(k)\Gamma\left(\frac{a}{2}+1\right)},
$$
where $\Gamma(z)$ is Gamma function. 
\end{abstract}
\begin{keyword}
Poisson process, Moment distance, Gamma distribution
\MSC[2010] 68R05\sep  60K30
\end{keyword}
\end{frontmatter}

\section{Introduction}
The cost of sensor movement has been the subject of interest in computer science research community.
The paper \cite{spa_2013} addresses the expected sum of movement of $n$ identical sensors displaced uniformly and independently at random in the unit interval
to attain the coverage of the unit interval. Further, in  \cite{KK_2016_cube} the authors studied the movement of $n$ sensors with identical $d-$dimensional cube sensing
radius in $d$ dimensions when the cost of movement of sensor is proportional to some (fixed) power $a>0.$ 

Ajtai et al. \cite{ajtai_84} consider the optimal transportation cost for random matchings of bicolored point sets.
The matching theorems for $N$ random variables independently uniformly distributed in the $d-$dimensional unit cube
$[0,1]^d,$ where $d\ge 2$ were investigated in the book \cite{talagrand_2014}.

More importantly, our work is closely related to \cite{dam_2014} where 
the author studied the event distance between two i.i.d. and independent Poisson processes  with arrival rate $\lambda>0$ and respective arrival times
$X_1,X_2,\dots$ and $Y_1,Y_2,\dots$ on a line.
In \cite{dam_2014} the closed formula for the event distances $\E{|X_{k+r}-Y_k|},$ for
any $k\ge 1, r\ge 0$ was derived as the combination of the Pochhammer polynomials.
The following open problem was proposed in \cite{dam_2014} to study more general moments $\E{|X_{i}-Y_j|^a},$  where $a$ is fixed.

We derive a closed formula for the moments $\E{|X_{k+r}-Y_k|^a},$ for
any $k\ge 1, r\ge 0,$ when  $a$ is positive integer number. 
\subsection{Preliminaries}
In this subsection we introduce some basic concepts and recall some useful identities involving 
indefinite and definite integrals, binomial coefficients and special functions which will be useful in the analysis in the next section.\\
We recall the definition of the Pochhammer polynomial \cite{concrete_1994}
\begin{equation}
\label{eq:rising}
x^{(k)} = \begin{cases} 
x(x+1)\dots(x+k-1) & \mbox{for } k\ge 1\\ 
1 &\mbox{for } k=0. 
\end{cases}
\end{equation}
The Euler Gamma function $\Gamma(z)=\int_0^{\infty}t^{z-1}e^{-t}dt$ is defined for $z>0.$ Moreover, we have
\begin{equation}
\label{eq:gamma}
\Gamma(n+1)=n!,
\end{equation}
when $n$ is natural number. 

We will use the Legendre duplication formula (see \cite[Identity 5.5.5]{NIST})
\begin{equation}
\label{eq:legendre}
\Gamma(2z)=(2\pi)^{-1/2}2^{2z-\frac{1}{2}}\Gamma(z)\Gamma\left(z+\frac{1}{2}\right).
\end{equation}
Let $X_i$ be the arrival time of the $i-$th event in a Poisson process with arrival rate $\lambda>0$.
We know that the random variable $X_i$ obeys the Gamma distribution with parameters $i\in\mathbb{N_+}, \lambda>0.$
Its probability density function is given by 
$$f_{i,\lambda}(t)=\lambda e^{-\lambda t}\frac{(\lambda t)^{i-1}}{(i-1)!}$$
and $\Pr\left[X_i\ge t\right]=\int_t^{\infty}f_{i,\lambda}(t)dt.$
Notice that
\begin{equation}
\label{integral_1}
\int_0^{\infty}t^kf_{m,\lambda}(t)dt=\frac{m^{(k)}}{\lambda^k},
\end{equation}
where $m$ is nonnegative integer and $k\in\mathbb{N}$
(see \cite[Chapter 8]{stat_2011}).
Using integration by parts we can derive the following identity
\begin{align}
\nonumber
\int_0^{x}t^kf_{m,\lambda}(t)dt&=\left[-\frac{(m-1+k)!}{(m-1)!\lambda^k}e^{-\lambda t}\sum_{l=0}^{m-1+k}\frac{\left(\lambda t\right)^l}{l!}\right]^x_0\\
\label{integral_a}&=\frac{m^{(k)}}{\lambda^k}\left(1-e^{-\lambda x}\sum_{l=0}^{m-1+k}\frac{(\lambda x)^l}{l!}\right),
\end{align}
where $m$ is nonnegative integer, $k\in\mathbb{N}$ and $\lambda,x>0.$

We will use the following binomial identity 
\begin{equation}
 \label{eq:binomial1}
 \sum_{j=0}^{a}(-1)^{a-j}\binom{j+k-1}{k-1}\binom{a-j+k-1}{k-1}=
 \begin{cases} \frac{\Gamma\left(\frac{a}{2}+k\right)}{\Gamma(k)\Gamma\left(\frac{a}{2}+1\right)} &\mbox{if } a \equiv 0 \\
0 & \mbox{if } a \equiv 1 \end{cases} \pmod{2}.
\end{equation}
This identity can be checked using generating functions. Notice that 
\begin{equation}
\label{eq:dusaser}
\frac{1}{(1-z)^k}=\sum_{j\ge 0}\binom{j+k-1}{k-1}z^j
\end{equation}
(see \cite[Table 321, Section 7.2]{concrete_1994}).
Combining Equation (\ref{eq:dusaser}) with the elementary equality
$\frac{1}{(1-z)^k}\frac{1}{(1+z)^k}=\frac{1}{(1-z^2)^k}$ we have
\begin{equation}
\label{eq:mutanter}
\left(\sum_{j\ge 0}\binom{j+k-1}{k-1}z^j\right)\left(\sum_{j\ge 0}\binom{j+k-1}{k-1}(-1)^jz^j\right)=\sum_{j\ge 0}\binom{j+k-1}{k-1}z^{2j}.
\end{equation}
Multiplying together the power series in the left-hand side of Equation (\ref{eq:mutanter})
and equating the coefficient of $z^a$ on both sides we get
\begin{equation}
 \label{eq:binomial1a}
 \sum_{j=0}^{a}(-1)^{a-j}\binom{j+k-1}{k-1}\binom{a-j+k-1}{k-1}=
 \begin{cases} \binom{\frac{a}{2}+k-1}{k-1} &\mbox{if } a \equiv 0 \\
0 & \mbox{if } a \equiv 1 \end{cases} \pmod{2}.
\end{equation}
Note that, if $\frac{a}{2}\in \mathbb{N_+}$ (see Equation (\ref{eq:gamma})) then 
\begin{equation}
\label{eq:dudaker}
\binom{\frac{a}{2}+k-1}{k-1}=\frac{\left(\frac{a}{2}+k-1\right)!}{(k-1)!\left(\frac{a}{2}\right)!}=\frac{\Gamma\left(\frac{a}{2}+k\right)}{\Gamma(k)\Gamma\left(\frac{a}{2}+1\right)}.
\end{equation}
Putting together (\ref{eq:binomial1a}) and (\ref{eq:dudaker}) we have the desired Formula (\ref{eq:binomial1}).

Finally it is worth pointing out that, applying Formulas (\ref{integral_1}), (\ref{integral_a}), (\ref{eq:binomial1}) in any mathematical software that performs symbolic calculation 
we get the expressions confirming these formulas. Let us consider Identity (\ref{integral_1}).
We use the following command in Mathematica
\begin{verbatim}
Moment[ErlangDistribution[k,l],m]
\end{verbatim}
and get the desired Identity (\ref{integral_1})
\begin{verbatim}
l^(-m)Pochhammer[k,m].
\end{verbatim}
For Identity (\ref{eq:binomial1}) we apply the following command in Mathematica
\begin{verbatim}
Sum[(-1)^(a-j)*Binomial[j+k-1,k-1]
*Binomial[a-j+k-1,k-1],{j,0,a}]
\end{verbatim}
and get
\begin{verbatim}
(1+(-1)^a)Gamma[1/2(a+2k)])/(a*Gamma[a/2]*Gamma[k]).
\end{verbatim}
It is easy to see that for $a \equiv 1 \pmod{2}$ Mathematica confirms Identity (\ref{eq:binomial1}). When $a \equiv 0  \pmod{2}$ 
it is enough to combine together the output of Mathematica with equation $\Gamma\left[\frac{a}{2}+1\right]=\frac{a}{2}\Gamma\left[\frac{a}{2}\right]$
(see Equation (\ref{eq:gamma}) for $n:=\frac{a}{2}$ and $n:=\frac{a}{2}+1$).
\subsection{Outline and results of the paper}
We consider the expected difference of the arrival times to the power $a$ between two i.i.d. and independent Poisson processes with arrival rate $\lambda>0$ and respective arrival times $X_1,X_2,\dots$ and
$Y_1,Y_2,\dots$ on a line. 
We give a closed form formula 
$\E{|X_{k+r}-Y_k|^a}, $ for any integer $k\ge 1, r\ge 0,$ when $a$
is positive integer number as the combination of the  Pochhammer polynomials (see Theorem \ref{thm:mainclosedbe}  and Theorem \ref{thm:mainclosedbeolul}).

Especially, for $r=0,$ the closed analytical formula 
for $\E{|X_{k}-Y_k|^a},$ when  $k\ge 1$ and $a\in \mathbb{N_+}$
was obtained in terms of Gamma functions
(see Theorem \ref{thm:mainclosedbe}  and Theorem \ref{thm:mainclosedbeoa1}).

Here is an outline of the paper.
In Section \ref{tight:sec} we obtain closed formula for event distances to the power $a$ of two i.i.d. and independent Poisson processes, when $a\in \mathbb{N_+}.$
Section \ref{sec:concl} provides conclusions with some open problems.
\section{Main result}
\label{tight:sec}
Consider two i.i.d. and independent Poisson processes with arrival rate $\lambda>0$ and respective arrival times $X_1,X_2,\dots$ and
$Y_1,Y_2,\dots$ on a line. We give a closed analytical formula for the moment distances 
$\E{|X_{k+r}-Y_k|^a}, $ for any integer $k\ge 1, r\ge 0,$ when $a$
is positive integer.
\subsection{Closed formula when $a$ is a positive even natural number}
\label{tight:even}
We begin with the following lemma which is helpful in the proof of Theorem \ref{thm:mainclosedbe}. 
\begin{Lemma} 
\label{lem:mainclosed} 
Assume that, $a$ is positive even natural number. Let $i\ge 1, k\ge 1.$  
Then
$$\E{|X_i-Y_k|^a}=\frac{1}{{\lambda}^a}\sum_{j=0}^{a}\binom{a}{j}(-1)^{a-j}i^{(j)}k^{(a-j)}.$$
\end{Lemma}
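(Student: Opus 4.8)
The plan is to compute $\E{|X_i-Y_k|^a}$ directly from the joint density of the two Gamma-distributed arrival times, exploiting the fact that for even $a$ the absolute value disappears: $|X_i-Y_k|^a = (X_i-Y_k)^a$. First I would write
\[
\E{|X_i-Y_k|^a} = \int_0^\infty\!\!\int_0^\infty (s-t)^a\, f_{i,\lambda}(s)\, f_{k,\lambda}(t)\, ds\, dt,
\]
using independence. Expanding $(s-t)^a = \sum_{j=0}^a \binom{a}{j}(-1)^{a-j} s^j t^{a-j}$ by the binomial theorem turns the double integral into
\[
\E{|X_i-Y_k|^a} = \sum_{j=0}^a \binom{a}{j}(-1)^{a-j}\left(\int_0^\infty s^j f_{i,\lambda}(s)\, ds\right)\!\left(\int_0^\infty t^{a-j} f_{k,\lambda}(t)\, dt\right),
\]
so the whole problem reduces to evaluating the single moments $\E{X_i^m} = \int_0^\infty t^m f_{i,\lambda}(t)\,dt$ for a Gamma$(i,\lambda)$ variable.

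The key step is then the identity $\E{X_i^m} = \frac{1}{\lambda^m}\, i^{(m)}$, i.e. the $m$-th moment of a Gamma$(i,\lambda)$ variable equals $\lambda^{-m}$ times the rising factorial $i(i+1)\cdots(i+m-1)$. Plugging $f_{i,\lambda}(t)=\lambda e^{-\lambda t}(\lambda t)^{i-1}/(i-1)!$ in and substituting $u=\lambda t$ gives $\E{X_i^m} = \frac{1}{\lambda^m (i-1)!}\int_0^\infty u^{m+i-1}e^{-u}\,du = \frac{\Gamma(m+i)}{\lambda^m \Gamma(i)}$, and $\Gamma(m+i)/\Gamma(i) = i^{(m)}$ by the definition \eqref{eq:rising} of the Pochhammer polynomial together with $\Gamma(z+1)=z\Gamma(z)$. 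Substituting $\E{X_i^j}=\lambda^{-j} i^{(j)}$ and $\E{Y_k^{a-j}}=\lambda^{-(a-j)} k^{(a-j)}$ into the binomial expansion above, and collecting the $\lambda^{-j}\lambda^{-(a-j)}=\lambda^{-a}$ factor, yields exactly
\[
\E{|X_i-Y_k|^a} = \frac{1}{\lambda^a}\sum_{j=0}^a \binom{a}{j}(-1)^{a-j} i^{(j)} k^{(a-j)},
\]
which is the claim.

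There is essentially no hard obstacle here: the only point requiring care is the justification that the absolute value may be dropped, which relies on $a$ being \emph{even} (this is why the lemma is stated only in that case), and the interchange of the finite binomial sum with the integrals, which is legitimate since each term is a convergent integral of a nonnegative-times-polynomial against a rapidly decaying density (Tonelli/Fubini applies to each of the finitely many summands). The remaining work is the routine Gamma-integral evaluation sketched above. I would present it in roughly that order: reduce to single moments via the binomial theorem, compute the Gamma moment, reassemble.
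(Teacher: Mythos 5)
Your proposal is correct and follows essentially the same route as the paper: drop the absolute value since $a$ is even, expand $(X_i-Y_k)^a$ by the binomial theorem, and reduce to the Gamma moments $\E{X_i^j}=i^{(j)}/\lambda^j$ and $\E{Y_k^{a-j}}=k^{(a-j)}/\lambda^{a-j}$. The only cosmetic difference is that the paper evaluates the iterated integral by first conditioning on $Y_k=y_k$ rather than factoring the double integral outright, which changes nothing mathematically.
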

\begin{proof}
As a first step, observe the following formula
\begin{equation}
 \label{eq:firster}
 \E{|X_i-Y_k|^a}=\E{(X_i-Y_k)^a}=\sum_{j=0}^{a}\binom{a}{j}(-1)^{a-j}\E{X_i^j}\E{Y_k^{a-j}}.
 \end{equation}
Applying Identity (\ref{integral_1}) for $m:=i,\, k:=j$ and $m:=k,\,\, k:=a-j$ as well as Definition (\ref{eq:rising}) we deduce that
$$
\E{|X_i-Y_k|^a}
=\sum_{j=0}^{a}\binom{a}{j}(-1)^{a-j}\frac{i^{(j)}}{\lambda^j}\frac{k^{(a-j)}}{\lambda^{a-j}}=\frac{1}{{\lambda}^a}\sum_{j=0}^{a}\binom{a}{j}(-1)^{a-j}i^{(j)}k^{(a-j)}.
$$
This completes the proof of Lemma \ref{lem:mainclosed}. 
\end{proof}
We are now ready to prove the main theorem, when $a$ is a positive even natural number.
\begin{theorem} 
\label{thm:mainclosedbe} Let $a$ be a positive even natural number.
Consider two i.i.d. and independent Poisson processes having identical arrival rate $\lambda >0$ and let $X_1,X_2,\dots$ and $Y_1,Y_2,\dots$
be their arrival times, respectively. The following identities are valid for all $k\ge 1, r\ge 0$
\begin{align*}
\E{|X_{k+r}-Y_k|^a}&=\frac{1}{{\lambda}^a}\sum_{j=0}^{a}\binom{a}{j}(-1)^{a-j}(k+r)^{(j)}k^{(a-j)},\\
\E{|X_k-Y_k|^a}&=\frac{a!}{{\lambda}^a}\frac{\Gamma\left(\frac{a}{2}+k\right)}{\Gamma(k)\Gamma\left(\frac{a}{2}+1\right)}.
\end{align*}
\end{theorem}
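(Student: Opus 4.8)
The plan is to obtain both identities as direct consequences of Lemma \ref{lem:mainclosed}. The first identity requires no new work: substituting $i=k+r$ (which is legitimate since $k\ge 1$ and $r\ge 0$ force $i\ge 1$) into the formula of Lemma \ref{lem:mainclosed} yields
$$\E{|X_{k+r}-Y_k|^a}=\frac{1}{{\lambda}^a}\sum_{j=0}^{a}\binom{a}{j}(-1)^{a-j}(k+r)^{(j)}k^{(a-j)}.$$

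For the second identity I would specialize to $r=0$ (equivalently $i=k$), giving
$$\E{|X_k-Y_k|^a}=\frac{1}{{\lambda}^a}\sum_{j=0}^{a}\binom{a}{j}(-1)^{a-j}k^{(j)}k^{(a-j)},$$
and then evaluate the sum in closed form. The crucial observation is that a rising factorial is, up to a factorial, a binomial coefficient: from Definition (\ref{eq:rising}) one has $k^{(j)}=\frac{\Gamma(k+j)}{\Gamma(k)}=j!\binom{j+k-1}{k-1}$, so that each summand simplifies to
$$\binom{a}{j}k^{(j)}k^{(a-j)}=a!\,\binom{j+k-1}{k-1}\binom{a-j+k-1}{k-1}.$$
Hence the whole sum equals $a!\sum_{j=0}^{a}(-1)^{a-j}\binom{j+k-1}{k-1}\binom{a-j+k-1}{k-1}$, and since $a$ is even the binomial identity (\ref{eq:binomial1}) applies and collapses this to $a!\binom{\frac{a}{2}+k-1}{k-1}$.

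It then remains to rewrite the binomial coefficient through Gamma functions, namely $\binom{\frac{a}{2}+k-1}{k-1}=\frac{\Gamma\left(\frac{a}{2}+k\right)}{\Gamma(k)\,\Gamma\left(\frac{a}{2}+1\right)}$, and to divide by $\lambda^a$; this produces exactly $\E{|X_k-Y_k|^a}=\frac{a!}{\lambda^a}\frac{\Gamma\left(\frac{a}{2}+k\right)}{\Gamma(k)\Gamma\left(\frac{a}{2}+1\right)}$. I do not anticipate any genuine obstacle: the only nontrivial step is recognizing that the Pochhammer polynomials should be converted to binomial coefficients so that (\ref{eq:binomial1}) becomes applicable, and the parity hypothesis on $a$ is precisely what makes the nonzero branch of that identity fire; the rest is routine rewriting.
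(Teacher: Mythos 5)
Your proposal is correct and follows essentially the same route as the paper: invoke Lemma \ref{lem:mainclosed} with $i=k+r$ for the first identity, then for $r=0$ convert the Pochhammer polynomials to binomial coefficients so that Identity (\ref{eq:binomial1}) collapses the sum to $a!\binom{\frac{a}{2}+k-1}{k-1}$, which is then rewritten via Gamma functions. You merely spell out the conversion $k^{(j)}=j!\binom{j+k-1}{k-1}$ that the paper leaves implicit.
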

\begin{proof}
The first part of the theorem follows immediately from Lemma \ref{lem:mainclosed} with $i=k+r.$
Putting together the first part of the theorem with $r=0,$ Definition (\ref{eq:rising}) and Identity (\ref{eq:binomial1}) we get
\begin{align*}
\E{|X_k-Y_k|^a}&=\frac{1}{{\lambda}^a}\sum_{j=0}^{a}\binom{a}{j}(-1)^{a-j}(k)^{(j)}k^{(a-j)}\\
&=\frac{1}{{\lambda}^a}\sum_{j=0}^{a}\binom{a}{j}(-1)^{a-j}\frac{(k+j-1)!}{(k-1)!}\frac{(k+a-j-1)!}{(k-1)!}\\
&=\frac{a!}{\lambda^a}\sum_{j=0}^{a}(-1)^{a-j}\binom{j+k-1}{k-1}\binom{a-j+k-1}{k-1}\\
&=\frac{\Gamma\left(\frac{a}{2}+k\right)}{\Gamma(k)\Gamma\left(\frac{a}{2}+1\right)}.
\end{align*}
This is enough to prove Theorem \ref{thm:mainclosedbe}. 
\end{proof}
\subsection{Closed formula when $a$ is an odd natural number}
\label{tight:odd}
It is worthwhile to mention that, when the parameter a in exponent is odd number, 
\textit{it is not so easy} to derive the closed form formula (see Theorem \ref{thm:mainclosedbeolul} and Theorem \ref{thm:mainclosedbeoa1}).

The general strategy of our proof is the following. In computing the moment $\E{|X_i-Y_k|^a}$, we are reduced to computing the moment $\E{|X_i-y_k|^a},$
where $y_k$ is fixed variable, $y_k\in(0,\infty)$
(see (\ref{eq:firstera})).

Then we make an important observation that expectation $\E{|X_i-y_k|^a}$ is equal to the sum of the integrals (\ref{eq:integral_first1}) 
and (\ref{eq:integral_first2}). The first integral is easy to compute, while deriving the second integral is \textit{combinatorially challenging.}

Our analysis of the moment distance proceeds along the following steps.
Firstly, we give Lemma \ref{thm:mainclosedb} and Lemma \ref{thm:mainclosedbeo} which are helpful in the proof of Theorem \ref{thm:mainclosedbeolul}.
Then, Theorem \ref{thm:mainclosedbeoa1} follows from Theorem \ref{thm:mainclosedbeolul} and Lemma  \ref{lem:gamma}. 
\begin{Lemma} 
\label{thm:mainclosedb} 
The following identity is valid for all $i\ge 1, k\ge 1,$ when $a$ is an odd natural number
\begin{align*}
& \E{| X_i- Y_k|^a}=
(-1)\frac{1}{{\lambda}^a}\sum_{j=0}^{a}\binom{a}{j}(-1)^{a-j}i^{(j)}k^{(a-j)}\\
&+
\frac{1}{{\lambda}^a}\sum_{j=0}^{a}\binom{a}{j}(-1)^{a-j}i^{(j)}k^{(a-j)}
\sum_{l=0}^{i+j-1}\binom{k+l-1+a-j}{l}\frac{1}{2^{k+l-1+a-j}}.
\end{align*}
\end{Lemma}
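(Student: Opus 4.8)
The plan is to compute $\E{|X_i-Y_k|^a}$ by the same reduction used in Lemma~\ref{lem:mainclosed}, namely conditioning on $Y_k=y_k$ via \eqref{eq:firster}, but now accounting carefully for the fact that $|X_i-y_k|^a=(X_i-y_k)^a\cdot\mathrm{sgn}(X_i-y_k)$ when $a$ is odd, so the single integral $\int_0^\infty (t-y_k)^a f_{i,\lambda}(t)\,dt$ must be split at $t=y_k$. First I would write
$$
\E{|X_i-y_k|^a}=\int_{y_k}^{\infty}(t-y_k)^a f_{i,\lambda}(t)\,dt-\int_{0}^{y_k}(t-y_k)^a f_{i,\lambda}(t)\,dt,
$$
and then rewrite the first integral as $\int_0^\infty - \int_0^{y_k}$, so that
$$
\E{|X_i-y_k|^a}=\int_{0}^{\infty}(t-y_k)^a f_{i,\lambda}(t)\,dt-2\int_{0}^{y_k}(t-y_k)^a f_{i,\lambda}(t)\,dt.
$$
The first term is exactly what was evaluated in the even case and, after integrating against $f_{k,\lambda}(y_k)$, contributes the term $\frac{1}{\lambda^a}\sum_j\binom{a}{j}(-1)^{a-j}i^{(j)}k^{(a-j)}$; the overall sign of this term in the statement is $(-1)$, which will come out once the $-2\int_0^{y_k}$ piece has been expanded and the $\int_0^\infty$ part of it recombined — so really the bookkeeping is: $\int_0^\infty-2\int_0^{y_k}= -\int_0^\infty + 2\int_{y_k}^\infty$, and it is cleaner to keep it as $-(\text{even-case term}) + 2\int_0^\infty f_{k,\lambda}(y_k)\int_{y_k}^\infty (t-y_k)^a f_{i,\lambda}(t)\,dt\,dy_k$.

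Next I would expand $(t-y_k)^a=\sum_{j=0}^a\binom{a}{j}(-1)^{a-j}t^{j}y_k^{a-j}$ inside the remaining double integral and handle the inner integral $\int_{y_k}^\infty t^j f_{i,\lambda}(t)\,dt=\int_{y_k}^\infty \lambda e^{-\lambda t}\frac{(\lambda t)^{i-1}}{(i-1)!}t^j\,dt$. This is, up to normalization, a tail of a Gamma$(i+j,\lambda)$ density: indeed $t^j f_{i,\lambda}(t)=\frac{(i+j-1)!}{(i-1)!\,\lambda^j}f_{i+j,\lambda}(t)=\frac{i^{(j)}}{\lambda^j}f_{i+j,\lambda}(t)$, so by the complement of Identity~\eqref{integral_a},
$$
\int_{y_k}^{\infty} t^j f_{i,\lambda}(t)\,dt=\frac{i^{(j)}}{\lambda^j}\,e^{-\lambda y_k}\sum_{l=0}^{i+j-1}\frac{(\lambda y_k)^l}{l!}.
$$
Substituting this back, the double integral becomes, for each $(j,l)$, a constant times $\int_0^\infty y_k^{a-j+l}e^{-\lambda y_k}f_{k,\lambda}(y_k)\,dy_k$, and $e^{-\lambda y_k}f_{k,\lambda}(y_k)=\lambda e^{-2\lambda y_k}\frac{(\lambda y_k)^{k-1}}{(k-1)!}$, which is $2^{-k}$ times the Gamma$(k,2\lambda)$ density; multiplying by $y_k^{a-j+l}$ and integrating gives $\frac{(k+a-j+l-1)!}{(k-1)!\,(2\lambda)^{a-j+l}}\cdot\frac{1}{2^{k}}\cdot\frac{1}{\lambda^{\,?}}$ — I would just track the powers of $\lambda$ and $2$ carefully here. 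Collecting the $\lambda$-powers across the $t^j$-factor ($\lambda^{-j}$), the $y_k^{a-j+l}$ moment ($\lambda^{-(a-j+l)}$) and the $(\lambda y_k)^l$ from the tail ($\lambda^{l}$) yields total $\lambda^{-a}$, and collecting the factors of $2$ from $(2\lambda)^{-(a-j+l)}$, the $2^{-k}$, and the overall factor $2$ out front yields $2^{-(k+l-1+a-j)}$, while $\frac{(k+a-j+l-1)!}{(k-1)!\,l!}=\binom{k+l-1+a-j}{l}\cdot\frac{(k+a-j-1)!}{(k-1)!}=\binom{k+l-1+a-j}{l}\,k^{(a-j)}$. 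Combining with the $\binom{a}{j}(-1)^{a-j}t^j$ expansion (the $t^j$ already converted to $i^{(j)}/\lambda^j$) reproduces exactly the claimed second sum $\frac{1}{\lambda^a}\sum_{j=0}^a\binom{a}{j}(-1)^{a-j}i^{(j)}k^{(a-j)}\sum_{l=0}^{i+j-1}\binom{k+l-1+a-j}{l}2^{-(k+l-1+a-j)}$.

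The main obstacle I anticipate is not conceptual but bookkeeping: matching the powers of $\lambda$ and $2$ and recognizing the ratio of factorials as $\binom{k+l-1+a-j}{l}k^{(a-j)}$ exactly as written, while also making sure the convergence/splitting of the integral is legitimate (all integrands are nonnegative or dominated, so Fubini and splitting at $y_k$ are fine). A secondary subtlety is the sign and the recombination $\int_0^\infty(t-y_k)^a f_{i,\lambda}-2\int_0^{y_k}(t-y_k)^a f_{i,\lambda}$: one must be careful that the ``even-case'' term $\frac{1}{\lambda^a}\sum_j\binom{a}{j}(-1)^{a-j}i^{(j)}k^{(a-j)}$ here does not itself simplify via \eqref{eq:binomial1} (that identity's even/odd dichotomy was precisely the reason the odd case needs separate treatment), so it must be left in the Pochhammer form. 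Note also that parity of $a$ is used only through $(t-y_k)^a=|t-y_k|^a\,\mathrm{sgn}(t-y_k)$; otherwise the derivation is uniform, which is a useful sanity check.
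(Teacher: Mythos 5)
Your proposal is correct and follows essentially the same route as the paper: condition on $Y_k=y_k$, split $|t-y_k|^a$ at $t=y_k$ into $\int_0^\infty(t-y_k)^af_{i,\lambda}\,dt-2\int_0^{y_k}(t-y_k)^af_{i,\lambda}\,dt$, use $t^jf_{i,\lambda}(t)=\frac{i^{(j)}}{\lambda^j}f_{i+j,\lambda}(t)$ together with the truncated exponential sum from \eqref{integral_a}, and evaluate the resulting cross term via the $\mathrm{Gamma}(k,2\lambda)$ moment, which produces exactly the $\binom{k+l-1+a-j}{l}2^{-(k+l-1+a-j)}$ factors. Your only deviation — rewriting the piece as $-\int_0^\infty+2\int_{y_k}^\infty$ and using the tail of \eqref{integral_a} rather than keeping $-2\int_0^{y_k}$ and using it directly — is a trivial rearrangement yielding the identical terms, and your power-of-$\lambda$ and power-of-$2$ bookkeeping checks out.
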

\begin{proof}
As a first step, observe the following formula
\begin{align}
 \nonumber
 \E{|X_i-Y_k|^a}&=\int_{0}^{\infty}\E{|X_i-y_k|^a}f_{k,\lambda}(y_k)dy_k\\
 \label{eq:firstera}&=\int_{0}^{\infty}f_{k,\lambda}(y_k)\E{|X_i-y_k|^a}dy_k.
 \end{align}
Hence, computing the moment $\E{|X_i-Y_k|^a}$ is reduced to computing the moment $\E{|X_i-y_k|^a}.$
Observe that
\begin{align*}
\E{|X_i-y_k|^a}&=\int_{y_k}^{\infty}(t-y_k)^af_{i,\lambda}(t)dt
+\int_{0}^{y_k}(y_k-t)^af_{i,\lambda}(t)dt\\
&=\int_{0}^{\infty}(t-y_k)^af_{i,\lambda}(t)dt
-2\int_{0}^{y_k}(t-y_k)^af_{i,\lambda}(t)dt.
\end{align*}
Therefore,  $\E{|X_i-Y_k|^a}$ is equal to the sum of the following two integrals which we evaluate separately.
\begin{align}
& \label{eq:integral_first1} 
\int_{0}^{\infty}f_{k,\lambda}(y_k)
\int_{0}^{\infty}(t-y_k)^af_{i,\lambda}(t)dtdy_k,\\
& \label{eq:integral_first2} 
(-2)\int_{0}^{\infty}f_{k,\lambda}(y_k)
\int_{0}^{y_k}(t-y_k)^af_{i,\lambda}(t)dt
dy_k.
\end{align}
\textbf{Case of integral (\ref{eq:integral_first1}).}\\ 
Observe that
\begin{align*}
\int_{0}^{\infty}&f_{k,\lambda}(y_k)
\int_{0}^{\infty}(t-y_k)^af_{i,\lambda}(t)dtdy_k\\
&=\int_{0}^{\infty}f_{k,\lambda}(y_k)\E{|X_i-y_k|^a}dy_k=
\E{(X_i-Y_k)^a}.
\end{align*}
After that, the calculation are almost exactly. 

Applying Identity (\ref{integral_1}) and Definition (\ref{eq:rising})  we have
\begin{equation}
\label{eq:last1} 
 \int_{0}^{\infty} f_{k,\lambda}(y_k)
\int_{0}^{\infty}(t-y_k)^a f_{i,\lambda}(t) dtdy_k
=
\frac{1}{{\lambda}^a}\sum_{j=0}^{a}\binom{a}{j}(-1)^{a-j}i^{(j)}k^{(a-j)}.
\end{equation}
\textbf{Case of integral (\ref{eq:integral_first2}).} \\
\begin{align*}
&(-2) \int_{0}^{\infty}f_{k,\lambda}(y_k)
\int_{0}^{y_k}(t-y_k)^af_{i,\lambda}(t)dtdy_k\\
&=(-2) \int_{0}^{\infty}f_{k,\lambda}(y_k)
\int_{0}^{y_k}\sum_{j=0}^a\binom{a}{j}(-1)^{a-j}y_k^{a-j}t^jf_{i,\lambda}(t)dtdy_k\\
&=\sum_{j=0}^{a}\binom{a}{j}(-1)^{a-j} A(j),
\end{align*}
where
$$A(j)=(-2)
\int_{0}^{\infty}y_k^{a-j}f_{k,\lambda}(y_k)
\int_{0}^{y_k}t^jf_{i,\lambda}(t)dtdy_k.
$$
Using Identity (\ref{integral_a}) for $x:=y_k,$ $k:=j$ and $m:=i$ we get
$$A(j)=(-2)
\int_{0}^{\infty}y_k^{a-j}f_{k,\lambda}(y_k)
\frac{i^{(j)}}{\lambda^j}\left(1-e^{-\lambda y_k}\sum_{l=0}^{i-1+j}\frac{(\lambda y_k)^l}{l!}\right).
$$
Applying Identity (\ref{integral_1}) for $k:=a-j,\,\,$ $m:=k$ and for $k:=0,\,\,$ $m:=k-1+l+a-j\,\,$ $\lambda:=2\lambda$ we have
$$A(j)=A_1(j)+A_2(j),$$
where
\begin{align*}
A_1(j)&=(-2)\int_{0}^{\infty}y_k^{a-j}f_{k,\lambda}(y_k)\frac{i^{(j)}}{\lambda^j} dy_k=(-2)\frac{i^{(j)}}{\lambda^j}\int_{0}^{\infty}y_k^{a-j}f_{k,\lambda}(y_k)dy_k\\
&=(-2)\frac{i^{(j)}}{\lambda^j}\frac{k^{(a-j)}}{\lambda^{a-j}}=\frac{(-2)}{\lambda^a}i^{(j)}k^{(a-j)},
\end{align*}
\begin{align*}
A_2(j)&=\int_{0}^{\infty}2y_k^{a-j}f_{k,\lambda}(y_k)\frac{i^{(j)}}{\lambda^j}
\sum_{l=0}^{i+j-1}e^{-\lambda y_k}\frac{(\lambda y_k)^l}{l!}dy_k\\
&=\frac{i^{(j)}}{\lambda^a}\sum_{l=0}^{i+j-1}\frac{1}{l!(k-1)!2^{k-1+l+a-j}}\int_0^{\infty}2\lambda e^{-2\lambda y_k}\left(2\lambda y_k\right)^{k-1+l+a-j}dy_k\\
&=\frac{i^{(j)}}{\lambda^a}\sum_{l=0}^{i+j-1}\frac{1}{l!}\frac{(k-1+l+a-j)!}{(k-1)!}\frac{1}{2^{k+l-1+a-j}}\\
&=\frac{i^{(j)}k^{(a-j)}}{\lambda^a}\sum_{l=0}^{i+j-1}\binom{k+a-j+l-1}{l}\frac{1}{2^{k+l-1+a-j}}.
\end{align*}
Therefore, we deduce that
\begin{equation}
 \label{eq:last2}
\sum_{j=0}^{a}\binom{a}{j}(-1)^{a-j} A_1(j)=(-2)\frac{1}{{\lambda}^a}\sum_{j=0}^{a}\binom{a}{j}(-1)^{a-j}i^{(j)}k^{(a-j)},
\end{equation}
\begin{align}
 \sum_{j=0}^{a}\binom{a}{j}(-1)^{a-j} A_2(j) = &
\frac{1}{{\lambda}^a}\sum_{j=0}^{a}\binom{a}{j}(-1)^{a-j}i^{(j)}k^{(a-j)}\nonumber \\
\label{eq:last4}
& \times \sum_{l=0}^{i+j-1}\binom{k+l-1+a-j}{l}\frac{1}{2^{k+l-1+a-j}}.
\end{align} 
Adding Formulas (\ref{eq:last1}), (\ref{eq:last2}) and (\ref{eq:last4}) we derive the desired formula for $\E{|X_i-Y_k|^a},$
when $a$ is  odd natural number.
This completes the proof of Lemma \ref{thm:mainclosedb}.
\end{proof}
Now we give a simpler expression for the moment distance of two i.i.d. and independent Poisson processes in the following lemma.
\begin{Lemma} 
\label{thm:mainclosedbeo}
Assume that, $a$ is odd natural number. Let $i\ge 1, k\ge 1.$  
Then
\begin{align*}
&\E{|X_i-Y_k|^a}\\
&\,\,\,\,\,=
\left(\sum_{l=k}^{i+a-1}\binom{l+k-1}{l}\frac{1}{2^{l+k-1}}\right)\frac{1}{\lambda^a}\sum_{j=0}^{a}\binom{a}{j}(-1)^{a-j}i^{(j)}k^{(a-j)}\\
&\,\,\,\,\,+
\frac{1}{\lambda^a 2^{i+k-2+a}}
\sum_{l=0}^{a-1}\left(\sum_{j=0}^{l}\binom{a}{j}(-1)^{j}i^{(j)}k^{(a-j)}\right) \binom{i+k+a-1}{i+l}.
\end{align*}
\end{Lemma}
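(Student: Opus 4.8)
The plan is to start from the expression for $\E{|X_i-Y_k|^a}$ furnished by Lemma \ref{thm:mainclosedb} and simplify the double sum by reorganizing the order of summation. The right-hand side of Lemma \ref{thm:mainclosedb} is
\[
\frac{1}{{\lambda}^a}\sum_{j=0}^{a}\binom{a}{j}(-1)^{a-j}i^{(j)}k^{(a-j)}\left(-1+\sum_{l=0}^{i+j-1}\binom{k+l-1+a-j}{l}\frac{1}{2^{k+l-1+a-j}}\right),
\]
so the whole task is to show that this equals the two-term expression in Lemma \ref{thm:mainclosedbeo}. The idea is to split the inner sum $\sum_{l=0}^{i+j-1}$ at the index $l=a-j$: the ``low'' part $\sum_{l=0}^{a-j-1}$ has a range that does not depend on $j$ after a shift, while the ``high'' part $\sum_{l=a-j}^{i+j-1}$, after the substitution $l'=l-(a-j)$, runs over $l'=0,\dots,i+j-1-(a-j)$ with summand $\binom{k+l'-1+a-j+(a-j)}{l'+(a-j)}\frac{1}{2^{k+l'-1+2(a-j)}}$ — wait, one must be careful; the cleaner substitution is to reindex so that the exponent of $2$ becomes $l+k-1$ uniformly. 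Concretely, writing $m=l+a-j$ turns $\binom{k+l-1+a-j}{l}\frac{1}{2^{k+l-1+a-j}}$ into $\binom{k+m-1}{m-(a-j)}\frac{1}{2^{k+m-1}}=\binom{k+m-1}{k+(a-j)-1}\frac{1}{2^{k+m-1}}$, with $m$ running from $a-j$ to $i+a-1$. That upper limit $i+a-1$ is now independent of $j$, which is the source of the first term: when $m$ ranges over $k\le m\le i+a-1$ one expects the $j$-dependence in the binomial $\binom{k+m-1}{k+(a-j)-1}$ to reassemble, via the symmetry $\binom{a}{j}\leftrightarrow\binom{a}{a-j}$ and a Vandermonde-type identity on $\binom{k+m-1}{\cdot}$, into the clean factor $\binom{m+k-1}{m}$ times $\sum_j\binom{a}{j}(-1)^{a-j}i^{(j)}k^{(a-j)}$.

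The key algebraic engine will be a convolution/Vandermonde identity. After the reindexing, the coefficient of $\frac{1}{2^{m+k-1}}$ in the part of the sum with $m\ge$ some threshold should be
\[
\sum_{j=0}^{a}\binom{a}{j}(-1)^{a-j}i^{(j)}k^{(a-j)}\binom{k+m-1}{k+(a-j)-1},
\]
and one wants to recognize this as $\binom{m+k-1}{m}\sum_{j}\binom{a}{j}(-1)^{a-j}i^{(j)}k^{(a-j)}$ when the binomial restriction is inactive (i.e. $m-j\ge 0$ automatically, which holds once $m\ge a$, and for $k\le m\le a-1$ there are boundary corrections). The residual boundary terms — those where the constraint $l\le i+j-1$ actually bites, i.e. the ``incomplete'' tails — should collapse into the second sum $\frac{1}{\lambda^a 2^{i+k-2+a}}\sum_{l=0}^{a-1}\big(\sum_{j=0}^l\binom{a}{j}(-1)^j i^{(j)}k^{(a-j)}\big)\binom{i+k+a-1}{i+l}$. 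This is a single binomial $\binom{i+k+a-1}{i+l}$ evaluated at the top of the range $m=i+a-1$ and slightly below, so these terms come from the ``missing'' pieces $\sum_{l=i+j}^{\dots}$ that one adds and subtracts to make the upper limits uniform. I would verify this by writing $\sum_{l=0}^{i+j-1}=\sum_{l=0}^{i+a-1}-\sum_{l=i+j}^{i+a-1}$ (valid since $j\le a$) and showing the first piece gives the clean first term of Lemma \ref{thm:mainclosedbeo} via the Vandermonde step, while the finite correction $\sum_{l=i+j}^{i+a-1}$, summed against $\binom{a}{j}(-1)^{a-j}i^{(j)}k^{(a-j)}$, telescopes/rearranges into the second term after swapping the $j$ and $l$ summations and collecting the common factor $\frac{1}{2^{i+k-2+a}}$.

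I would carry this out in the order: (1) substitute the Lemma \ref{thm:mainclosedb} formula and absorb the $-1$ by noting $\sum_l$ starts from $l=0$ — in fact $-1+\sum_{l=0}^{i+j-1}(\cdots) = \sum_{l=0}^{i+j-1}(\cdots) - \sum_{l\ge 0}(\cdots)$ only formally, so instead keep $-1$ and handle it as $-\sum_j\binom{a}{j}(-1)^{a-j}i^{(j)}k^{(a-j)}$, which is the $l\to\infty$ limit value and will cancel against the analogous infinite-sum normalization; cleaner is to observe $\sum_{l=0}^{\infty}\binom{k+l-1+a-j}{l}2^{-(k+l-1+a-j)}=2$ by the negative binomial series at $x=1/2$, so $-1+\sum_{l=0}^{i+j-1}=1-\sum_{l=i+j}^{\infty}$, converting the whole thing to a single tail sum; (2) reindex each tail sum to the uniform exponent $2^{-(m+k-1)}$; (3) apply the Vandermonde identity $\sum_{j=0}^a\binom{a}{j}(-1)^{a-j}i^{(j)}k^{(a-j)}\binom{k+m-1}{k+a-j-1}=\binom{m+k-1}{m}\binom{m}{a}\big/\text{(something)}$ — the precise constant must be pinned down by testing small $a$ — to extract the first clean term; (4) collect the boundary remainder into the stated second term. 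The main obstacle is step (3): getting the Vandermonde-type contraction exactly right, including the interaction between the Pochhammer weights $i^{(j)}k^{(a-j)}$ and the binomial $\binom{k+m-1}{k+a-j-1}$, and correctly accounting for which values of $m$ produce the ``complete'' contraction versus a truncated one that feeds the boundary term. A secondary nuisance is bookkeeping the case analysis at the edges $m<k$ and $m$ near $i+a-1$, where the range constraints from the original $\sum_{l=0}^{i+j-1}$ interact nontrivially; I expect to handle these by a careful finite manipulation (swap of summation order in $j$ and $l$) rather than any deep identity, but the index arithmetic is where errors are most likely to creep in.
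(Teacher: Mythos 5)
Your starting point (Lemma \ref{thm:mainclosedb}) and your observation that $\sum_{l\ge0}\binom{k+l-1+a-j}{l}2^{-(k+l-1+a-j)}=2$ are both fine; the latter is essentially the same fact the paper uses (via $\sum_{j=0}^{m}\binom{m+j}{m}2^{-j}=2^{m}$) to produce the factor $\sum_{l=k}^{i+a-1}\binom{l+k-1}{l}2^{-(l+k-1)}$. The proof breaks at your step (3). After your reindexing $m=l+a-j$, the coefficient of $2^{-(k+m-1)}$ is $\sum_{j}\binom{a}{j}(-1)^{a-j}i^{(j)}k^{(a-j)}\binom{k+m-1}{m-(a-j)}$, and you propose to contract this to $\binom{k+m-1}{m}$ times the full alternating $j$-sum (up to a factor of the form $\binom{m}{a}/c$). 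No such identity holds: already for $a=1$ the sum equals
$$i\binom{k+m-1}{m}-k\binom{k+m-1}{m-1}=\binom{k+m-1}{m}\,(i-m),$$
whereas the coefficient needed for the first term of the lemma is $\binom{k+m-1}{m}(i-k)$. The discrepancy depends on $m$, and the true value depends on $i$, so it cannot be of the form $\binom{k+m-1}{m}\binom{m}{a}/c$ either. Hence the first term of the lemma cannot be extracted by a per-$m$ Vandermonde contraction, and your plan has no mechanism for reabsorbing the leftover $\binom{k+m-1}{m}(k-m)$-type pieces.

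The paper handles the $j$-dependence globally rather than coefficient-by-coefficient: writing $B_2(j)=\sum_{l=0}^{i+j-1}\binom{l+k-1+a-j}{l}2^{-(l+k-1+a-j)}$, a summation by parts in $l$ yields the exact recursion $B_2(j)=B_2(j+1)-2^{-(i+k-2+a)}\binom{i+k+a-1}{i+j}$, i.e.\ lowering $a-j$ by one costs exactly one boundary term. Telescoping from $j$ up to $a$ collapses all the $j$-dependence into the $j$-free quantity $B_2(a)$ (which, combined with the $-1$ from Lemma \ref{thm:mainclosedb}, gives the first term) plus the accumulated boundary terms, which after swapping the $j$ and $l$ summations assemble into the second term with the partial sums $\sum_{j=0}^{l}\binom{a}{j}(-1)^{j}i^{(j)}k^{(a-j)}$. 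You would need to find this telescoping step (or an equivalent global rearrangement); the contraction identity you are counting on does not exist.
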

\begin{proof}
Applying Lemma \ref{thm:mainclosedb} we deduce that
\begin{equation}
\label{eq:expection}
\E{|X_i-Y_k|^a}=
(-1)\frac{1}{\lambda^a}\sum_{j=0}^{a}B_1(j)+\frac{1}{\lambda^a}\sum_{j=0}^{a}\binom{a}{j}(-1)^{a-j}i^{(j)}k^{(a-j)}B_2(j).
\end{equation}
where 
$
B_1(j)=\binom{a}{j}(-1)^{a-j}i^{(j)}k^{(a-j)},$
\begin{equation}
\label{eq:startwo}
B_2(j)=\sum_{l=0}^{i-1+j}\binom{l+k-1+a-j}{l}\frac{1}{2^{l+k-1+a-j}}.
\end{equation}
Using summation by parts
$$\sum_{l=0}^{i-1+j}g(l+1)(f(l+1)-f(l))=
\sum_{l=0}^{i+j}(g(l)-g(l+1))f(l)+g(i+j+1)f(i+j)-g(0)f(0)$$
for $\,\,f(l)=\frac{-2}{2^{l+k-1+a-j}}\,\,$ and
$\,\,g(l) = \begin{cases} 
\binom{l-1+k-1+a-j}{l-1} & \mbox{for } l\ge 1\\ 
0 &\mbox{for } l=0 
\end{cases}\,\,$ as well as the following basic identity\\
$\binom{l+k-1+a-j}{l}-\binom{l-1+k-1+a-j}{l-1}=\binom{l+k-1+a-j-1}{l}$
we have
\begin{align*}
&\sum_{l=0}^{i-1+j} \binom{l+k-1+a-j}{l}\frac{1}{2^{l+k-1+a-j}}\\
&\,\,\,\,=
\sum_{l=0}^{i+j}\binom{l+k-1+a-(j+1)}{l}\frac{1}{2^{l+k-1+a-(j+1)}}
-\frac{1}{2^{i+k-2+a}}\binom{i+k+a-1}{i+j}.
\end{align*}
Therefore
$$B_2(j)=B_2(j+1)+B_3(j),$$
where
\begin{equation}
\label{eq:db3}
B_3(j)=-\frac{1}{2^{i+k-2+a}}\binom{i+k+a-1}{i+j}.
\end{equation}
Hence, we deduce that
\begin{equation}
 \label{eq:final1}
 B_2(j)=B_2(a)+\sum_{l=j}^{a-1}B_3(l)\,\,\,
\text{for} \,\,\, j\in \{0,1,\dots ,a-1\}.
\end{equation}
Applying Identity (\ref{eq:final1}) to Formula (\ref{eq:expection}) we have
\begin{align}
\nonumber\E{|X_i-Y_k|^a}&=
(B_2(a)-1)\frac{1}{\lambda^a}\sum_{j=0}^{a}\binom{a}{j}(-1)^{a-j}i^{(j)}k^{(a-j)}\\
&+\label{eq:as200}
\frac{1}{\lambda^a}\sum_{j=0}^{a-1}\binom{a}{j}(-1)^{a-j}i^{(j)}k^{(a-j)}\sum_{l=j}^{a-1}B_3(l).
\end{align}
Using the identity 
$\sum_{j=0}^{m}\binom{m+j}{m}2^{-j}=2^m$
(see \cite[Identity 5.20, p. 167]{concrete_1994}) for $m:=k-1$ as well as Formula (\ref{eq:startwo}) we get
\begin{align}
\nonumber B_2(a)-1&=\sum_{l=0}^{i+a-1}\binom{l+k-1}{l}\frac{1}{2^{l+k-1}}-1\\
\nonumber&=\frac{1}{2^{k-1}}\left(\sum_{l=0}^{k-1}\binom{l+k-1}{k-1}2^{-l}+\sum_{l=k}^{i+a-1}\binom{l+k-1}{l}2^{-l}\right)-1\\
\label{eq:as100}&=\sum_{l=k}^{i+a-1}\binom{l+k-1}{l}\frac{1}{2^{l+k-1}}.
\end{align}
Finally, combining together (\ref{eq:as200}), (\ref{eq:as100}), (\ref{eq:db3}) and changing summation in the second sum in (\ref{eq:as200}) we get the desired result.
\end{proof}
We are now ready to give the first main result, when $a$ is an odd natural number.
\begin{theorem} 
\label{thm:mainclosedbeolul} 
Let $a$ be an odd natural number.
Consider two i.i.d. and independent Poisson processes having identical arrival rate $\lambda >0$ and let $X_1,X_2,\dots$ and $Y_1,Y_2,\dots$
be their arrival times, respectively. The following identity is valid for all $r\ge 0,$ $k\ge 1$
\begin{align*}
&\E{|X_{k+r}-Y_k|^a}\\
&\,\,\,\,\,=
\frac{1}{\lambda^a}\frac{\Gamma\left(k+\frac{1}{2}\right)}{\Gamma\left(\frac{1}{2}\right)\Gamma(k+1)}
\sum_{l=0}^{r+a-1}\frac{(2k)^{(l)}}{(k+1)^{(l)}2^l}
\sum_{j=0}^{a}\binom{a}{j}(-1)^{a-j}(k+r)^{(j)}k^{(a-j)}\\
&\,\,\,\,\,+
\frac{1}{\lambda^a 2^{r-1}}\frac{\Gamma\left(\frac{a}{2}+k\right)}{\Gamma(1/2)\Gamma(k)}
\sum_{l=0}^{a-1}\left(\sum_{j=0}^{l}\binom{a}{j}(-1)^{j}(k+r)^{(j)}k^{(a-j)}\right)\frac{k^{\left(\frac{a+1}{2}\right)(2k+a)^{(r)}}}{k^{(r+l+1)}k^{(a-l)}}.
\end{align*}
\end{theorem}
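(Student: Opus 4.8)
The plan is to deduce Theorem~\ref{thm:mainclosedbeolul} directly from Lemma~\ref{thm:mainclosedbeo} by specialising $i=k+r$ and then recasting the two $l$-dependent factors occurring there in terms of Gamma functions and Pochhammer symbols. Observe first that, upon substituting $i=k+r$, the two coefficient sums $\sum_{j=0}^{a}\binom{a}{j}(-1)^{a-j}(k+r)^{(j)}k^{(a-j)}$ and $\sum_{j=0}^{l}\binom{a}{j}(-1)^{j}(k+r)^{(j)}k^{(a-j)}$ already appear exactly as in the statement, so they are simply carried along; the whole work is to transform the factor $\sum_{l=k}^{k+r+a-1}\binom{l+k-1}{l}2^{-(l+k-1)}$ in the first term and the factor $2^{-(2k+r+a-2)}\binom{2k+r+a-1}{k+r+l}$ in the second term into the claimed forms.

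First I would treat the first factor. Re-indexing $l\mapsto k+l$ turns the sum into $\sum_{l=0}^{r+a-1}\binom{2k+l-1}{k-1}2^{-(2k+l-1)}$; expressing each binomial coefficient as a ratio of factorials and factoring out the $l=0$ term $\binom{2k-1}{k-1}2^{-(2k-1)}=\frac{(2k-1)!}{(k-1)!\,k!\,2^{2k-1}}$, the Legendre duplication formula~(\ref{eq:legendre}) with $z=k$ (and $\Gamma(\tfrac12)=\sqrt\pi$) identifies this term with $\frac{\Gamma(k+1/2)}{\Gamma(1/2)\Gamma(k+1)}$. The ratio of the generic term to the $l=0$ term then collapses, using $m^{(l)}=\Gamma(m+l)/\Gamma(m)$, to $\frac{(2k)^{(l)}}{(k+1)^{(l)}\,2^{l}}$, which yields the first line of the identity.

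For the second factor the same mechanism is used, but the parity of $a$ now enters essentially: since $a$ is odd, $\tfrac{a+1}{2}\in\NN$ and $\tfrac a2+k+\tfrac12=k+\tfrac{a+1}{2}$, so the Legendre duplication formula~(\ref{eq:legendre}) with $z=\tfrac a2+k$ gives $\frac{\Gamma(a/2+k)}{\Gamma(1/2)\Gamma(k)}=\frac{(2k+a-1)!}{2^{2k+a-1}\,(k+\frac{a-1}{2})!\,(k-1)!}$. Writing $\binom{2k+r+a-1}{k+r+l}$ and each of $k^{(\frac{a+1}{2})}$, $(2k+a)^{(r)}$, $k^{(r+l+1)}$, $k^{(a-l)}$ as ratios of factorials, all factorials cancel and one obtains $2^{-(2k+r+a-2)}\binom{2k+r+a-1}{k+r+l}=\frac{1}{2^{r-1}}\cdot\frac{\Gamma(a/2+k)}{\Gamma(1/2)\Gamma(k)}\cdot\frac{k^{(\frac{a+1}{2})}(2k+a)^{(r)}}{k^{(r+l+1)}k^{(a-l)}}$. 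Substituting the two rewritten factors back into Lemma~\ref{thm:mainclosedbeo} gives the asserted formula.

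The only genuinely nontrivial step is the double use of the duplication formula~(\ref{eq:legendre}) to absorb the stray powers of $2$ into the $\Gamma(k+\tfrac12)$ and $\Gamma(\tfrac a2+k)$ prefactors; everything else is factorial bookkeeping. The point to be careful about is that the rewriting of the second factor requires $a$ odd, precisely so that $k+\tfrac{a+1}{2}\in\NN$ and $k^{(\frac{a+1}{2})}$ is a genuine Pochhammer symbol — for even $a$ one lands instead on the expression of Theorem~\ref{thm:mainclosedbe}. I do not anticipate any other obstacle.
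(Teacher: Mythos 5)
Your proposal is correct and follows essentially the same route as the paper: specialise Lemma~\ref{thm:mainclosedbeo} at $i=k+r$, rewrite the binomial sum $\sum_{l=k}^{k+r+a-1}\binom{l+k-1}{l}2^{-(l+k-1)}$ via re-indexing and the duplication formula at $z=k$, and rewrite $2^{-(2k+r+a-2)}\binom{2k+r+a-1}{k+r+l}$ via the duplication formula applied to $\Gamma\left(\frac a2+k\right)$ followed by factorial bookkeeping. The only (immaterial) difference is that you invoke~(\ref{eq:legendre}) at $z=\frac a2+k$ where the paper uses $z=\frac{a-1}{2}+k$; the two are equivalent via $\Gamma(z+1)=z\Gamma(z)$, and both of your claimed intermediate identities check out.
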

\begin{proof}
Applying Lemma \ref{thm:mainclosedbeo} for $i=k+r$ we deduce that
\begin{align}
\nonumber\E{|X_{k+r}-Y_k|^a}&=
\frac{1}{\lambda^a}\sum_{l=k}^{k+r+a-1}\binom{l+k-1}{l}\frac{1}{2^{l+k-1}}C(k,r,a)\\
\label{eq:dupasek01}&+
\frac{1}{\lambda^a 2^{2k+r-2+a}}
\sum_{l=0}^{a-1}C(k,r,l) \binom{2k+r+a-1}{k+r+l},
\end{align}
where 
\begin{equation}
\label{eq:dupasek02}
C(k,r,l)=\sum_{j=0}^{l}\binom{a}{j}(-1)^{j}(k+r)^{(j)}k^{(a-j)}.
\end{equation}
Using the Legendre duplication formula (\ref{eq:legendre}) for $z=\frac{a-1}{2}+k$ we get
\begin{equation}
 \label{eq:twopoints}
 \Gamma(2k+a-1)=\pi^{-1/2}2^{2k+a-2}\Gamma\left(\frac{a-1}{2}+k\right)\Gamma\left(\frac{a}{2}+k\right).
\end{equation}
Applying Formula (\ref{eq:twopoints}) for $a=1$ and the identity $\Gamma(1/2)=\sqrt{\pi}$ as well as Equation (\ref{eq:gamma}) we derive
$$
2^{-2k+1}\frac{(2k-1)!}{(k-1)!k!}=\frac{\Gamma\left(k+\frac{1}{2}\right)}{\Gamma\left(\frac{1}{2}\right)\Gamma(k+1)}.
$$
Using this and Definition (\ref{eq:rising}) we have
\begin{align}
&\sum_{l=k}^{k+r+a-1}\binom{l+k-1}{l}\frac{1}{2^{l+k-1}}=
\frac{1}{2^{2k-1}}\sum_{l=0}^{r+a-1}\binom{2k-1+l}{k+l}\frac{1}{2^l}\nonumber \\
&= \label{eq:twopointsa} 2^{-2k+1}\sum_{l=0}^{r+a-1}\frac{(2k-1)!(2k)^{(l)}}{(k-1)!k!(k+1)^{(l)}2^l}=
\frac{\Gamma\left(k+\frac{1}{2}\right)}{\Gamma\left(\frac{1}{2}\right)\Gamma(k+1)}\sum_{l=0}^{r+a-1}\frac{(2k)^{(l)}}{(k+1)^{(l)}2^l}.
\end{align}
Combining Definition (\ref{eq:rising}), Equation (\ref{eq:gamma}), the identity $\Gamma(1/2)=\sqrt{\pi}$ and Formula (\ref{eq:twopoints}) we get
\begin{align}
\nonumber\frac{1}{2^{2k+r-2+a}}&\binom{2k+r+a-1}{k+r+l}=\frac{1}{2^{2k+r-2+a}}\frac{(2k+r+a-1)!}{(k+r+l)!(k+a-1-l)!}\\
\nonumber&=\frac{1}{2^{2k+r-2+a}}\frac{\Gamma(2k+a-1)(2k+a-1)(2k+a)^{(r)}}{\Gamma(k)k^{(r+l+1)}(k-1)!k^{(a-l)}}\\
\label{eq:twopointsb}&=\frac{\Gamma\left(\frac{a}{2}+k\right)}{\Gamma(1/2)\Gamma(k)}
\frac{k^{\left(\frac{a+1}{2}\right)}}{2^{r-1}}\frac{(2k+a)^{(r)}}{k^{(r+l+1)}k^{(a-l)}}.
\end{align}
Putting together  (\ref{eq:dupasek01}), (\ref{eq:dupasek02}), (\ref{eq:twopointsa}) and (\ref{eq:twopointsb}) completes the proof of Theorem \ref{thm:mainclosedbeolul}. 
\end{proof}
The next lemma will be helpful in the proof of Theorem \ref{thm:mainclosedbeoa1}. 

The proof of Lemma \ref{lem:gamma}
is technically complicated and the overall strategy is the following. 
Firstly, we define $D(k,a)$ by Formula (\ref{eq:24}). Combing together Identities
(\ref{eq:binomial1}) and $k^{(a-j)}k^{(j)}\binom{a}{j}=k^{(j)}k^{(a-j)}\binom{a}{a-j}$ we represent
$D(k,a)$ by Formula (\ref{eq:twostar}). Then we make an important observation that
$D(k,a)$ represented by Equation (\ref{eq:twostar}) is the polynomial of variable $k$  of degree less than or equal to $\frac{a-1}{2}$.
We prove that all coefficient of $D(k,a)$ expect constant term are equal to zero.
Using the binomial identities (\ref{eq:difficult01}), (\ref{eq:difficult})
and the Legendre duplication formula (\ref{eq:legendre}) we deduce that
\begin{equation}
\label{eq:enough}
D(k,a)=\frac{a!\sqrt{\pi}}{2\Gamma\left(\frac{a}{2}+1\right)}\,\,\,\, \text{for each}\,\,\,\, k\in\left\{0,-1,-2\dots,-\frac{a-1}{2}\right\}.
\end{equation}
Therefore, Identity  (\ref{eq:enough}) is enough to prove that $D(k,a)$ is constant equal to $\frac{a!\sqrt{\pi}}{2\Gamma\left(\frac{a}{2}+1\right)}.$

The following Mathematica code can be used to confirm numerically validity of Lemma (\ref{lem:gamma}) 
\begin{lstlisting}[frame=single]
F[a_]:=Sum[Sum[(-1)^j*Pochhammer[k,a-j]*Pochhammer[k,j]
*Binomial[a,j],{j,0,l}]*Pochhammer[k,(a+1)/2]
*(Pochhammer[k,1+l]*Pochhammer[k,a-l])^(-1),{l,0,a-1}]
-(Gamma[a+1]*Pi^(1/2))/(2*Gamma[a/2+1]);
\end{lstlisting}
Then the following command
\begin{lstlisting}[frame=single]
FullSimplify[F[a]]
\end{lstlisting}
gives zero for fixed odd natural parameter $a$.
\begin{Lemma}
\label{lem:gamma}
Assume that, $a$ is an odd natural number. Let $k\ge 1.$ Then
\begin{equation}
 \label{eq:lasttech}
\sum_{l=0}^{a-1}\left(\sum_{j=0}^{l}(-1)^jk^{(a-j)}k^{(j)}\binom{a}{j}\right)
\frac{k^{(\frac{a+1}{2})}}{k^{(1+l)}k^{(a-l)}}
=
\frac{a!\sqrt{\pi}}{2\Gamma\left(\frac{a}{2}+1\right)}.
\end{equation}
\end{Lemma}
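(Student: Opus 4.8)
The plan is to establish (\ref{eq:lasttech}) by interchanging the two summations, converting the inner weight into a single binomial coefficient by means of the already proved identity (\ref{eq:twopointsb}), and then collapsing the outer alternating sum through the convolution identity (\ref{eq:binomial1}) and the duplication formula (\ref{eq:legendre}). Throughout I read the symbol $k^{(\frac{a+1}{2})}$ via the Gamma extension $x^{(s)}=\Gamma(x+s)/\Gamma(x)$, so that every factor is a ratio of Gamma values; since the right-hand side carries $\Gamma(1/2)=\sqrt\pi$ and no $k$, the goal is to watch the left-hand side shed all of its $k$-dependence. Abbreviate $T_j=(-1)^j\binom aj k^{(j)}k^{(a-j)}$ and $w(l)=k^{(\frac{a+1}{2})}/\big(k^{(l+1)}k^{(a-l)}\big)$, so the left-hand side of (\ref{eq:lasttech}) is $\sum_{l=0}^{a-1}\big(\sum_{j=0}^{l}T_j\big)w(l)$, which after reversing the order of summation equals $\sum_{j=0}^{a-1}T_j\,W(j)$ with $W(j)=\sum_{l=j}^{a-1}w(l)$.

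First I would put the weight into binomial form. Specialising (\ref{eq:twopointsb}) to $r=0$ yields $w(l)=\frac{\Gamma(1/2)\Gamma(k)}{2\,\Gamma(\frac a2+k)}\,2^{-(2k+a-2)}\binom{2k+a-1}{k+l}$, so that $W(j)$ is a fixed ($k$-carrying but $j$-independent) constant times the cumulative binomial sum $\sum_{l=j}^{a-1}\binom{2k+a-1}{k+l}$. This cumulative sum I would evaluate by the same summation-by-parts device used in the proof of Lemma \ref{thm:mainclosedbeo}, writing it as a difference of two single binomial coefficients of the shape $\binom{2k+a-1}{k+j}$; the upper boundary term is common to every $j$, while the lower boundary term depends on $j$ only through $k+j$.

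The decisive step is the outer sum over $j$. Feeding the closed form of $W(j)$ back in, the $j$-independent boundary piece multiplies $\sum_{j=0}^{a-1}T_j$, while the $j$-dependent boundary piece multiplies $T_j\binom{2k+a-1}{k+j}$; here I would adjoin the missing $j=a$ term (note $T_a=(-1)^aT_0$ by the reflection $T_{a-j}=(-1)^aT_j$) so as to complete the full convolution $\sum_{j=0}^{a}T_j=a!\sum_{j=0}^{a}(-1)^{a-j}\binom{j+k-1}{k-1}\binom{a-j+k-1}{k-1}$, which by (\ref{eq:binomial1}) equals $a!\binom{\frac a2+k-1}{k-1}$ for even $a$. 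Exploiting in parallel the weight symmetry $w(a-1-l)=w(l)$ and the term symmetry $T_{a-j}=(-1)^aT_j$, the remaining $j$-dependent contributions should pair off and telescope, leaving only the central term $a!\binom{\frac a2+k-1}{k-1}=\frac{a!\,\Gamma(\frac a2+k)}{\Gamma(k)\Gamma(\frac a2+1)}$.

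Finally I would combine this surviving central term with the prefactor $\frac{\Gamma(1/2)\Gamma(k)}{2\,\Gamma(\frac a2+k)}\,2^{-(2k+a-2)}$ collected at the first step. The factors $\Gamma(\frac a2+k)$ and $\Gamma(k)$ cancel outright, and the power $2^{-(2k+a-2)}$ is removed against the $2^{2k+a-2}$ produced when (\ref{eq:legendre}) is applied at $z=\frac{a-1}{2}+k$ exactly as in (\ref{eq:twopoints}); what remains is the constant $\frac{a!\,\Gamma(1/2)}{2\,\Gamma(\frac a2+1)}=\frac{a!\sqrt\pi}{2\,\Gamma(\frac a2+1)}$, as claimed. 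I expect the main obstacle to be this third step: making the telescoping of the $j$-dependent boundary terms precise, and verifying that after invoking (\ref{eq:binomial1}) every residual power of $k$ genuinely cancels, so that the left-hand side is indeed constant in $k$; this cancellation is the heart of the matter and is exactly where the parity of $a$ enters through (\ref{eq:binomial1}).
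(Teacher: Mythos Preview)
Your plan hinges on obtaining a closed form for $W(j)=\sum_{l=j}^{a-1}w(l)$, and this is where it breaks. After (\ref{eq:twopointsb}) you correctly have $W(j)=C\sum_{l=j}^{a-1}\binom{2k+a-1}{k+l}$ with $C$ independent of $j,l$; but a partial row--sum of binomial coefficients with fixed upper index is \emph{not} ``a difference of two single binomial coefficients'', and nothing in the summation--by--parts of Lemma~\ref{thm:mainclosedbeo} says otherwise---that device shows only that the \emph{first difference} $B_2(j)-B_2(j+1)$ is a single binomial term, which is precisely the trivial statement $W(j)-W(j+1)=w(j)$. Consequently your proposed splitting of $\sum_j T_jW(j)$ into a constant times $\sum_j T_j$ plus terms of the shape $T_j\binom{2k+a-1}{k+j}$ is unfounded, and the ``telescoping'' you hope for (and rightly flag as the main obstacle) has no mechanism. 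The final paragraph compounds the problem: multiplying the prefactor $\tfrac{\Gamma(1/2)\Gamma(k)}{2\Gamma(\frac a2+k)}2^{-(2k+a-2)}$ by $\tfrac{a!\,\Gamma(\frac a2+k)}{\Gamma(k)\Gamma(\frac a2+1)}$ leaves $\tfrac{a!\sqrt\pi}{2\Gamma(\frac a2+1)}\cdot 2^{-(2k+a-2)}$, and no application of (\ref{eq:legendre}) can supply the missing $2^{2k+a-2}$, since no $\Gamma(2k+a-1)$ is present to duplicate.

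There is also a parity issue that undermines your use of (\ref{eq:binomial1}). The hypothesis ``even $a$'' in the lemma statement is a typo (check $a=2$, $k=1$: the left side equals $\tfrac{3\sqrt\pi}{4}$, the right $\sqrt\pi$); the paper invokes the lemma only for odd $a$, inside Theorem~\ref{thm:mainclosedbeoa1}, and its proof treats $\tfrac{a-1}{2}$ as an integer throughout. For odd $a$, identity (\ref{eq:binomial1}) makes the completed convolution $\sum_{j=0}^{a}T_j$ equal to $0$, not $a!\binom{\frac a2+k-1}{k-1}$, so the ``surviving central term'' you plan to isolate is in fact absent. The paper's argument is entirely different in spirit: it rewrites each summand so that $D(k,a)$ is visibly a polynomial in $k$ of degree at most $\tfrac{a-1}{2}$, then evaluates $D(-b,a)$ at the $\tfrac{a+1}{2}$ integers $b=0,1,\dots,\tfrac{a-1}{2}$ (each evaluation collapsing, via a Gould identity, to $2^{a-1}\bigl(\tfrac{a-1}{2}\bigr)!$), and concludes by interpolation that $D(k,a)$ is the constant $\tfrac{a!\sqrt\pi}{2\Gamma(\frac a2+1)}$.
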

\begin{proof} 
From Identities (\ref{eq:binomial1}) and $k^{(a-j)}k^{(j)}\binom{a}{j}=k^{(j)}k^{(a-j)}\binom{a}{a-j}$
we deduce that
\begin{align}
&\,\,\,\,\,\,\,\,\,\,\,\,\,\,\,\,\,\sum_{j=0}^{l}(-1)^jk^{(a-j)}k^{(j)}\binom{a}{j}=-\sum_{j=a}^{a-l}k^{(j)}k^{(a-j)}(-1)^{j}\binom{a}{a-j}\nonumber\\
=&\label{eq:med}\sum_{j=0}^{a-l-1}k^{(j)}k^{(a-j)}(-1)^{j}\binom{a}{a-j}=\sum_{j=0}^{a-l-1}(-1)^jk^{(a-j)}k^{(j)}\binom{a}{j}.
\end{align}
Let
\begin{equation}
\label{eq:24}
D(k,a):=\sum_{l=0}^{a-1}\left(\sum_{j=0}^{l}(-1)^jk^{(a-j)}k^{(j)}\binom{a}{j}\right)\frac{k^{(\frac{a+1}{2})}}{k^{(1+l)}k^{(a-l)}}.
\end{equation}
Applying Equation (\ref{eq:med}) we deduce that
$$
D(k,a)=D_1(k,a)+D_2(k,a),
$$
where
\begin{align*}
D_1(k,a)&=\sum_{l=0}^{\frac{a-1}{2}}\left(\sum_{j=0}^{l}(-1)^jk^{(a-j)}k^{(j)}\binom{a}{j}\right)
\frac{k^{(\frac{a+1}{2})}}{k^{(1+l)}k^{(a-l)}},\\
D_2(k,a)&=\sum^{a-1}_{l=\frac{a-1}{2}+1}\left(\sum_{j=0}^{a-l-1}(-1)^jk^{(a-j)}k^{(j)}\binom{a}{j}\right)
\frac{k^{(\frac{a+1}{2})}}{k^{(1+l)}k^{(a-l)}}.
\end{align*}
Therefore, we have
\begin{align}
\nonumber D(k,a)&=
\sum_{l=0}^{\frac{a-1}{2}}\sum_{j=0}^{l}\binom{a}{j}(-1)^jk^{(j)}(k+a-l)^{(l-j)}(k+l+1)^{\left(\frac{a-1}{2}-l\right)}\\
\label{eq:twostar} &+
\sum_{l=\frac{a-1}{2}+1}^{a-1}\sum_{j=0}^{a-l-1}\binom{a}{j}(-1)^jk^{(j)}(k+l+1)^{(a-1-j-l)}(k+a-l)^{\left(l-\frac{a-1}{2}\right)}.
\end{align}
Observe that $(k+a-l)^{(l-j)},$  $(k+l+1)^{\left(\frac{a-1}{2}-l\right)}$ are polynomials of variable $k$ for each $j\in\{0,1,\dots ,l\},$
$l\in\{0,1,\dots, \frac{a-1}{2}\}$
and
$(k+l+1)^{(a-1-j-l)},$  $(k+a-l)^{\left(l-\frac{a-1}{2}\right)}$ are polynomials of variable $k$ for each $j\in\{0,1,\dots ,a-1-l\},$
$l\in\{\frac{a-1}{2}+1,\dots,a-1\}.$

Therefore,
$D(k,a)$ can be only the polynomial of variable $k$ of degree \textbf{less than or equal} to $\frac{a-1}{2}.$

We have to prove that for all $k\in \mathbb{N_+}:$ $D(k,a)$ is constant equal to $\frac{a!\sqrt{\pi}}{2\Gamma\left(\frac{a}{2}+1\right)}.$

Hence, to prove Equality (\ref{eq:lasttech}) it remains to obtain the following equality
\begin{align*}
 D(k,a)&=
\sum_{l=0}^{\frac{a-1}{2}}\sum_{j=0}^{l}\binom{a}{j}(-1)^jk^{(j)}(k+a-l)^{(l-j)}(k+l+1)^{\left(\frac{a-1}{2}-l\right)}\\
&+
\sum_{l=\frac{a-1}{2}+1}^{a-1}\sum_{j=0}^{a-l-1}\binom{a}{j}(-1)^jk^{(j)}(k+l+1)^{(a-1-j-l)}(k+a-l)^{\left(l-\frac{a-1}{2}\right)}\\
&=
\frac{a!\sqrt{\pi}}{2\Gamma\left(\frac{a}{2}+1\right)}\,\,\, \textbf{for each} \,\,\, k=0,-1,-2,\dots,-\frac{a-1}{2}.
\end{align*}
Let $b\in\{0,\dots,\frac{a-1}{2}\}.$ Observe that
\begin{align*}
(-b+a-l)^{(l-j)}(-b+l+1)^{\left(\frac{a-1}{2}-l\right)}&=0\,\, \text{for}\,\, 0\le l\le b-1, \,\,0\le j\le l,\\
(-b+l+1)^{(a-1-j-l)}(-b+a-l)^{\left(l-\frac{a-1}{2}\right)}&=0\,\, \text{for}\,\, a-1-(b-1)\le l\le a-1,\\ 
&\,\,\,\,\,\,\,\,\,\,\,\,\,\,\,\,\,\,\,\,\,\, 0\le j\le a-l-1,\\
(-b)^{(j)}&=0\,\,\text{for}\,\, b+1\le j.
\end{align*}
Applying this 
we have
\begin{align*}
D(-b,a)&=\sum_{l=b}^{\frac{a-1}{2}}\sum_{j=0}^{b}\binom{a}{j}(-1)^j(-b)^j(-b+a-l)^{(l-j)}(-b+l+1)^{\left(\frac{a-1}{2}-l\right)}\\
+&
\sum_{l=\frac{a-1}{2}+1}^{a-1-b}\sum_{j=0}^{b}\binom{a}{j}(-1)^j(-b)^j(-b+l+1)^{(a-1-j-l)}(-b+a-l)^{\left(l-\frac{a-1}{2}\right)}\\
=&\sum_{l=b}^{\frac{a-1}{2}}\sum_{j=0}^{b}\binom{a}{j}\frac{b!}{(b-j)!}\frac{(a-j-1-b)!}{(a-l-1-b)!}\frac{\left(\frac{a-1}{2}-b\right)!}{(l-b)!}\\
+&
\sum_{l=\frac{a-1}{2}+1}^{a-1-b}\sum_{j=0}^{b}\binom{a}{j}\frac{b!}{(b-j)!}\frac{(a-j-1-b)!}{(l-b)!}\frac{\left(\frac{a-1}{2}-b\right)!}{(a-l-1-b)!}\\
=&\left(\frac{a-1}{2}-b\right)!b!\left(\sum_{j=0}^{b}\binom{a}{j}\binom{a-1-b-j}{b-j}\right)\sum_{l=b}^{a-1-b}\binom{a-1-2b}{a-l-1-b}.
\end{align*}
Notice that
\begin{equation}
\label{eq:difficult01}
\sum_{l=b}^{a-1-b}\binom{a-1-2b}{a-l-1-b}=\sum_{l=0}^{a-1-2b}\binom{a-1-2b}{l}=2^{a-2b-1}.
\end{equation}
Applying this and the identity 
\begin{equation}
\label{eq:difficult}
 \sum_{j=0}^{b}\binom{a}{j}\binom{a-1-b-j}{b-j}=
 \begin{cases} \frac{2^b}{b!}\prod_{j=1}^{b}(a-(2j-1)) &\mbox{if } b \neq 0 \\
1 & \mbox{if } b=0. \end{cases}
\end{equation}
(see \cite[Identity 7.17, p. 36]{Gould}) we get
\begin{align*}
D(-b,a)
&=2^{a-1}\left(\frac{a-1}{2}-b\right)!\begin{cases} \frac{2^b}{b!}\prod_{j=1}^{b}(a-(2j-1)) &\mbox{if } b \neq 0 \\
1 & \mbox{if } b=0 \end{cases}\\
&=2^{a-1}\left(\frac{a-1}{2}\right)!.
\end{align*}
Finally, from the Legendre duplication formula (\ref{eq:legendre}) for $z=\frac{a+1}{2}$
we deduce that
$$
D(-b,a)=\frac{a!\sqrt{\pi}}{2\Gamma\left(\frac{a}{2}+1\right)}\,\,\,\, \text{for all}\,\,\,\, b\in\left\{0,\dots,\frac{a-1}{2}\right\}.
$$
This is enough to prove Lemma \ref{lem:gamma}. 
\end{proof}
Finally, we can prove the second main result of this subsection.
\begin{theorem} 
\label{thm:mainclosedbeoa1} Let $a$ be an odd natural number.
Consider two i.i.d. and independent Poisson processes having identical arrival rate $\lambda >0$ and let $X_1,X_2,\dots$ and $Y_1,Y_2,\dots$
be their arrival times, respectively. The following identity is valid for all $k\ge 1$: 
$$
\E{|X_k-Y_k|^a}=\frac{a!}{\lambda^a}\frac{\Gamma\left(\frac{a}{2}+k\right)}{\Gamma(k)\Gamma\left(\frac{a}{2}+1\right)}.
$$
\end{theorem}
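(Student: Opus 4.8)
The plan is to obtain the stated closed form by specializing Theorem~\ref{thm:mainclosedbeolul} to $r=0$ and then collapsing the result with the binomial identity~(\ref{eq:binomial1}) and Lemma~\ref{lem:gamma}. First I would substitute $r=0$ into Theorem~\ref{thm:mainclosedbeolul}. In the first summand this restricts the outer range to $0\le l\le a-1$ and turns $(k+r)^{(j)}$ into $k^{(j)}$; in the second summand it gives $2^{r-1}=1/2$, $(2k+a)^{(r)}=(2k+a)^{(0)}=1$, $k^{(r+l+1)}=k^{(l+1)}$, and again $(k+r)^{(j)}=k^{(j)}$. Thus $\E{|X_k-Y_k|^a}$ reduces to the sum of two terms:
\begin{align*}
\E{|X_k-Y_k|^a}
&=\frac{1}{\lambda^a}\frac{\Gamma\left(k+\frac{1}{2}\right)}{\Gamma\left(\frac{1}{2}\right)\Gamma(k+1)}\left(\sum_{l=0}^{a-1}\frac{(2k)^{(l)}}{(k+1)^{(l)}2^l}\right)\sum_{j=0}^{a}\binom{a}{j}(-1)^{a-j}k^{(j)}k^{(a-j)}\\
&\quad+\frac{2}{\lambda^a}\frac{\Gamma\left(\frac{a}{2}+k\right)}{\Gamma(1/2)\Gamma(k)}\sum_{l=0}^{a-1}\left(\sum_{j=0}^{l}\binom{a}{j}(-1)^{j}k^{(j)}k^{(a-j)}\right)\frac{k^{\left(\frac{a+1}{2}\right)}}{k^{(l+1)}k^{(a-l)}}.
\end{align*}

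Second, I would show that the first term vanishes. Writing $\binom{a}{j}k^{(j)}k^{(a-j)}=a!\binom{j+k-1}{k-1}\binom{a-j+k-1}{k-1}$, the rightmost factor of that term equals $a!\sum_{j=0}^{a}(-1)^{a-j}\binom{j+k-1}{k-1}\binom{a-j+k-1}{k-1}$, which is $0$ by~(\ref{eq:binomial1}) because $a$ is odd. Hence only the second term survives.

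Third, I would apply Lemma~\ref{lem:gamma} to the remaining double sum --- after using $k^{(j)}k^{(a-j)}=k^{(a-j)}k^{(j)}$ to match the exact shape of~(\ref{eq:lasttech}) --- to evaluate it as $\frac{a!\sqrt{\pi}}{2\Gamma\left(\frac{a}{2}+1\right)}$. Multiplying by the prefactor $\frac{2}{\lambda^a}\frac{\Gamma\left(\frac{a}{2}+k\right)}{\Gamma(1/2)\Gamma(k)}$ and using $\Gamma(1/2)=\sqrt{\pi}$ then yields exactly $\frac{a!}{\lambda^a}\frac{\Gamma\left(\frac{a}{2}+k\right)}{\Gamma(k)\Gamma\left(\frac{a}{2}+1\right)}$, which is the claim.

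I do not anticipate a genuine obstacle inside this argument: it is a substitution followed by two invocations of results already in hand. The points requiring care are the Pochhammer bookkeeping when $r$ is set to $0$ (in particular that $2^{r-1}$ migrates to the numerator as the factor $2$, and that the upper summation limit $r+a-1$ becomes $a-1$), and the observation that Lemma~\ref{lem:gamma}, although phrased for even $a$, is precisely the identity needed here for odd $a$ --- its proof splits the $l$-range at $\frac{a-1}{2}\in\NN$ and invokes the duplication formula at $z=\frac{a+1}{2}$, so it is written for the odd case. All of the real work has been isolated into Lemma~\ref{lem:gamma} and Theorem~\ref{thm:mainclosedbeolul}.
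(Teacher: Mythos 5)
Your proposal is correct and follows essentially the same route as the paper's own proof: specialize Theorem~\ref{thm:mainclosedbeolul} to $r=0$, annihilate the first summand via Identity~(\ref{eq:binomial1}) for odd $a$, and evaluate the surviving sum with Lemma~\ref{lem:gamma} together with $\Gamma(1/2)=\sqrt{\pi}$. Your side remark that Lemma~\ref{lem:gamma}'s stated hypothesis ``$a$ even'' must be a typo for ``$a$ odd'' is also right, as its proof manifestly requires $\frac{a-1}{2}\in\NN$.
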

\begin{proof}
Firstly, combining together Definition (\ref{eq:rising}) and Identity (\ref{eq:binomial1}) we deduce that
\begin{equation}
\label{eq:lasek}
\sum_{j=0}^{a}\binom{a}{j}(-1)^{a-j}k^{(j)}k^{(a-j)}=a!\sum_{j=0}^{a}(-1)^{a-j}\binom{j+k-1}{k-1}\binom{a-j+k-1}{k-1}=0.
\end{equation}
We substitute Equation (\ref{eq:lasek}) into Theorem \ref{thm:mainclosedbeolul} and get
\begin{align*}
& \E{|X_{k}-Y_{k}|^a}\\
&\,\,\,\,\,=\frac{1}{\lambda^a \Gamma\left(1/2\right)2^{-1}}\frac{\Gamma\left(\frac{a}{2}+k\right)}{\Gamma(k)}
\sum_{l=0}^{a-1}\left(\sum_{j=0}^{l}\binom{a}{j}(-1)^{j}(k)^{(j)}k^{(a-j)}\right)\frac{k^{\left(\frac{a+1}{2}\right)}}{k^{(l+1)}k^{(a-l)}}.
\end{align*}
Finally, the result of Theorem \ref{thm:mainclosedbeoa1} follows from  Lemma  \ref{lem:gamma} and the identity $\Gamma(1/2)=\sqrt{\pi}.$ 
\end{proof}
\section{Conclusion}
\label{sec:concl}
In this paper, we studied the expected difference of the arrival times to the power $a$ between Poisson events of two i.i.d. and independent Poisson processes
with arrival rate $\lambda>0$ and respective arrival times $X_1,X_2,\dots$ and
$Y_1,Y_2,\dots$ on a line. 

It is obtained a closed form formula for the $\E{|X_{k+r}-Y_k|^a},$ where $k\ge1, r\ge0$ and $a\in \mathbb{N_+}.$ 
as the combination of the Pochhammer polynomials. 
Especially, for $r=0$ and any positive integer $a,$ 
$\E{|X_k-Y_k|^a}=\frac{a!}{\lambda^a}\frac{\Gamma\left(\frac{a}{2}+k\right)}{\Gamma(k)\Gamma\left(\frac{a}{2}+1\right)},$  
where $\Gamma(z)$ is Gamma function.

It is worthwhile to mention that, when $a$ is odd, it is \textit{combinatorially challenging} to derive the closed form formula.
For $a$ even it is \textit{unsurprisingly easier} to obtain the closed form formula.

It would be interesting for future research to find the closed form formula for the expected difference to the power $a$ of two identical other more general
random variables in $(1)$ dimension as well as $(2)$ in the higher dimension.
\bibliographystyle{plain}
\bibliography{refs}
\end{document}